\documentclass[12pt]{article}
\usepackage[a4paper,top=2cm,bottom=2cm,left=2cm,right=2cm]{geometry}
\usepackage{amsmath}
\usepackage{amsthm}
\usepackage{amssymb}

\usepackage{subfig}
\usepackage[most]{tcolorbox}
\newtcbtheorem{mytcbprob}{Problem}{}{problem}
\usepackage{graphicx}
\usepackage{braket}
\usepackage[colorlinks=true, allcolors=blue]{hyperref}
\usepackage{listings}
\usepackage[ruled]{algorithm2e}
\usepackage{enumerate}
\usepackage{url}
\usepackage{pgfplots}
\pgfplotsset{compat=1.15}
\usetikzlibrary{arrows}
\usepackage[title]{appendix}
\usepackage{graphicx}

\newtheorem{lemma}{Lemma}
\newtheorem{theorem}{Theorem}

\title{Optimal exact quantum algorithm for the promised element distinctness problem }
\author{Guanzhong Li\thanks{Email: ligzh9@mail2.sysu.edu.cn}~ and Lvzhou Li\thanks{Email: lilvzh@mail.sysu.edu.cn (corresponding author)}\\
\small{{\it Institute of Quantum Computing and Computer Theory,}}\\
\small{{\it School of Computer Science and Engineering,}}\\
\small {{\it  Sun Yat-sen University, Guangzhou 510006, China}}}

\begin{document}
\maketitle

\begin{abstract}	
The element distinctness problem is to determine whether a string $x=(x_1,\ldots,x_N)$ of $N$ elements contains two elements of the same value (a.k.a colliding pair), for which Ambainis proposed an optimal quantum algorithm. The idea behind Ambainis' algorithm is to first reduce the problem to the promised version in which $x$ is promised to contain at most one colliding pair, and then design an algorithm $\mathcal{A}$ requiring $O(N^{2/3})$ queries based on quantum walk search for the promise problem. However, $\mathcal{A}$ is probabilistic and may fail to give the right answer. We thus, in this work, design an exact quantum algorithm for the promise problem which never errs and requires $O(N^{2/3})$ queries. This algorithm is  proved optimal. Technically, we modify the quantum walk search operator on quasi-Johnson graph to have arbitrary phases, and then use Jordan's lemma as the analyzing tool to reduce the quantum walk search operator to the generalized Grover's operator. This allows us to utilize the recently proposed fixed-axis-rotation (FXR) method for exact quantum search, and hence achieve 100\% success.
\end{abstract}

\section{Introduction}
In the element distinctness problem, we are given a black box (oracle) $O_x$ that when queried index $i$ of the unknown string $x=(x_1,x_2,\ldots,x_N)\in [M]^N$, where $[M]=\{1,2,\ldots,M\}$ and $M\geq N$, it  outputs the value $x_i$.
The task now is to determine whether the string $x$ contains two equal items $x_{i}=x_{j}$ (called the colliding pair), with as few queries to the index oracle $O_x$ as possible.
The problem is well-known both in classical and quantum computation. The classical bounded-error query complexity (a.k.a decision tree complexity, see \cite{decision_tree_survey} for more information) is shown to be $\Theta(N)$ by a trivial reduction from the OR-problem \cite{quantum_alg1.2}. In comparison, Ambainis proposed a $O(N^{2/3})$-query quantum algorithm with bounded-error \cite{quantum_alg2.1,quantum_alg2.2}.
The algorithm is also optimal: the bounded-error quantum query complexity of the element distinctness problem has previously been shown to be $\Omega(N^{2/3})$ \cite{quantum_lower1,quantum_lower2,quantum_lower3} by the polynomial method \cite{polynomial}.

In \cite{quantum_alg2.1,quantum_alg2.2}, Ambainis first designed a $O(N^{2/3})$-query quantum algorithm $\mathcal{A}$ which solves the promised version in which  $x$ contains at most one colliding pair, by using quantum walk search on Johnson graph. 
He then reduced the problem to this promised version, by sampling a sequence of exponentially shrinking subsets to run $\mathcal{A}$. 
By showing the probability that the sequence has a subset containing at most one colliding pair is greater than $1/2$ and the overall query complexity is of the same order, Ambainis successfully solved the element distinctness problem with query complexity $O(N^{2/3})$. 
This two-stage process shows the promise problem that the algorithm $\mathcal{A}$ solves is crucial, and we formalize it in the following:

\begin{mytcbprob}{element distinctness problem with at most one colliding pair }{at_most_1}
\textbf{input:} index oracle $O_x$ hiding the unknown string $x=(x_1,x_2,\ldots,x_N)\in [M]^N$.

\textbf{promise:} $x$ contains exactly one colliding pair $x_i=x_j$ with $i\neq j$; or $x_i$s are distinct for $i\in[N]$.

\textbf{output:} index $(i,j)$ of the colliding pair $x_i=x_j$; or ``all distinct”.
\end{mytcbprob}

In fact, $\mathcal{A}$ is a one-sided error algorithm: if the string $x$ contains a colliding pair $x_i=x_j$, then $\mathcal{A}$ might asserts that $x$ is ``all distinct''; but if elements in $x$ are all distinct, then $\mathcal{A}$ won't make a mistake. This inspires us to consider the question: is there an exact quantum algorithm that solves Problem~\ref{problem:at_most_1} with the same query complexity $O(N^{2/3})$? In this paper, we answer the question affirmatively by presenting Algorithm~\ref{alg:main_block} in Section~\ref{sec:alg}. We summarize our result as:

\begin{theorem}\label{thm:intro}
There exits an exact quantum algorithm that solves Problem~\ref{problem:at_most_1} with query complexity $O(N^{2/3})$.
\end{theorem}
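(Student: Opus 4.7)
My plan follows the three-step program announced in the abstract. First, I would take Ambainis' quantum walk on the quasi-Johnson graph whose vertices are $r$-subsets of $[N]$ decorated with the oracle values on the subset, and generalize its two defining $\pi$-phase reflections --- the ``check'' reflection marking subsets that contain the colliding pair, and the walk reflection built from one diffusion step --- into two free-phase rotations with angles $\alpha$ and $\beta$. This produces a modified walk operator $W(\alpha,\beta)$ in direct analogy with the Long/BHMT two-phase generalization of Grover's iteration; fixing $r=\Theta(N^{2/3})$ keeps the per-iteration query cost unchanged.

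Second, I would analyze $W(\alpha,\beta)$ spectrally through Jordan's lemma. Since $W(\alpha,\beta)$ is the product of two generalized reflections, Jordan's lemma splits the Hilbert space into invariant subspaces of dimension at most two, and on each 2D piece $W(\alpha,\beta)$ acts as an $SU(2)$ rotation whose axis and angle are determined by the principal angle between the two reflection subspaces. The uniform initial state $\ket{\pi}$ lies inside one distinguished 2D subspace --- the same subspace singled out in Ambainis' original analysis --- and within it $W(\alpha,\beta)$ is conjugate to a generalized Grover iteration $G(\alpha,\beta,\theta)$ whose base angle $\theta$ is an explicit function of $N$ and $r$ alone; in the all-distinct branch $\ket{\pi}$ is a fixed vector of $W(\alpha,\beta)$, so no marked outcome is ever observed and the algorithm deterministically returns ``all distinct''.

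Third, because $\theta$ is known exactly once $N$ and $r$ are fixed, I can invoke the fixed-axis-rotation (FXR) method, which prescribes concrete phases $\alpha^{\star},\beta^{\star}$ and an iteration count $k=\Theta(N^{2/3})$ for which $G(\alpha^{\star},\beta^{\star},\theta)^{k}$ rotates the initial vector exactly onto the marked eigenvector. Unwinding Jordan's decomposition then shows that $W(\alpha^{\star},\beta^{\star})^{k}\ket{\pi}$ is supported entirely on marked subsets, so a final measurement recovers a subset containing the colliding pair with probability $1$; the indices $(i,j)$ are read off classically from the data register. The total query cost is $O(r)+O(k)=O(N^{2/3})$, matching Ambainis' bound.

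The main obstacle I anticipate is faithfully carrying out the Jordan reduction for the \emph{modified} walk rather than for Ambainis' $\pi$-phase walk: I must verify that switching on the free phases does not push $\ket{\pi}$ out of the original 2D invariant subspace, and that the restricted action really agrees with a generalized Grover iteration for a clean, computable $\theta$. A secondary technical point is calibrating the FXR recipe --- originally phrased in terms of Grover's large search angle --- to the much smaller walk angle $\theta=\Theta(\sqrt{r}/N)$ coming from the Johnson graph, which only requires a rescaling but must be tracked carefully so that the resulting $k$ stays within the $O(N^{2/3})$ budget.
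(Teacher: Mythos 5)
Your high-level plan (arbitrary-phase walk, Jordan's lemma, FXR) names the right ingredients, but the central step as you describe it would fail. You assert that the product of the generalized check operator and the generalized walk operator preserves a distinguished $2$-dimensional subspace containing the uniform state, on which it acts exactly as a generalized Grover iteration with a computable angle $\theta$. That statement is only \emph{approximately} true --- it is precisely the asymptotic picture underlying Ambainis' and Portugal's bounded-error analyses --- and it cannot be made exact as stated: the marked-set projector is not aligned with the Jordan subspaces of the walk step, so the composite operator leaks out of any $2$-dimensional subspace containing $\ket{\psi_0}$ and $\ket{T}$. The paper's actual resolution is a two-level structure you do not have. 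Jordan's lemma is applied to the walk step $u=U_B(\theta_2)U_A(\theta_1)$ \emph{alone} (note: two independent phases on the two local diffusions, not one phase $\beta$ on ``the walk reflection''), decomposing the relevant $5$-dimensional subspace into $\{\ket{\psi_0}\}$ plus two Bloch spheres with distinct rotation angles $\phi_1,\phi_2$. One then chooses $\theta_1,\theta_2$ and an inner iteration count $ct_2=O(\sqrt{r})$ so that $ct_2\phi_1\equiv ct_2\phi_2\equiv 0 \pmod{2\pi}$ (the ``clock hands'' synchronization, Eqs.~\eqref{eq:inner_angle1} and \eqref{eq:inner_iter1}), which forces $u^{ct_2}$ to be \emph{exactly} a phase rotation about $\ket{\psi_0}$. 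Only after this collapse is $u^{ct_2}R_T(\alpha)$ an exact generalized Grover iteration on $\mathrm{span}\{\ket{\psi_0},\ket{T}\}$. Proving that the synchronization equations admit solutions (Lemma~\ref{thm:inner_exist}) is a genuine piece of work your proposal has no analogue of; two free phases $\theta_1,\theta_2$ are needed because there are two congruences to satisfy.

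A second, related misconception: you treat both phases $\alpha^{\star},\beta^{\star}$ as freely prescribable by FXR. In fact the diffusion phase $\beta$ is \emph{fixed} by the inner-loop synchronization ($\beta=-ct_2(\theta_1+\theta_2)/2 \bmod 2\pi$) and cannot be tuned; this is exactly why the FXR method is invoked --- it achieves exact search with an arbitrary \emph{given} $\beta$ by alternating two marking phases $\alpha_1,\alpha_2$ in the outer loop of $t_1=O(N/r)$ iterations. Your single-loop picture also obscures the query count: the budget works out to $O(r)+O\!\left(\frac{N}{r}\cdot\sqrt{r}\right)=O(N^{2/3})$ only because checks are interleaved with $O(\sqrt{r})$ walk steps, not one-per-step.
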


The main steps of  Algorithm~\ref{alg:main_block} is as follows:
\begin{enumerate}
    \item Converting Problem~\ref{problem:at_most_1} to quantum walk search on a quasi-Johnson graph $G$ (using the staggered model \cite{staggered}). This is done in Section~\ref{subsec:johonson} similar to \cite{Portugal_2018,Portugal_book}, but we modify the quantum walk operator $u := U_B(\theta_2) U_A(\theta_1)$ and the operator $R_T(\alpha)$ which marks the target vertices by multiplying them by the phase $e^{i\alpha}$, so that the angle parameters $\theta_1,\theta_2,\alpha$ are no longer restricted to $\pi$. The flexibility in setting these angle parameters offers an opportunity to design an exact algorithm.
    
    \item Reducing quantum walk search on the whole graph $G$ to its $5$-dimensional invariant subspace $\mathcal{H}_0$, by classifying the vertices of $G$, i.e. $(r+1)$-sized subsets $(S,y)$ of string $x$'s indices $[N]$, into $5$ groups according to the $5$ different cases in which subset $(S,y)$ intersect with the colliding pair. This is presented in Section~\ref{sec:reduction}, where we also obtain the expression of $U_B(\theta_2)$ and $U_A(\theta_1)$ in $\mathcal{H}_0$.
    
    \item Reducing further the $5$-dimensional subspace of quantum walk search to the $2$-dimensional qubit subspace $\mathcal{H}_1 := \text{span}\{\ket{\psi_0},\ket{T}\}$. This part (shown in Section~\ref{sec:clock}) is the core of this paper with quite some technique, and we elaborate it in the following steps:
    \begin{enumerate}[(i)]
        \item Based on Jordan's Lemma \cite{Jordan_1875}, we decompose the quantum walk operator $u=U_B(\theta_2) U_A(\theta_1)$ to the direct sum of a scalar-multiplication in the $1$-dimensional invariant subspace spanned by initial state $\ket{\psi_0}$, and two 3D-rotations on two different Bloch spheres (two $2$-dimensional invariant subspaces).
        \item Based on the special rotation angles $\phi_1,\phi_2$ of the two 3D-rotations obtained in step (i), and the {\it intuition} of the minute hand (the bigger angle $\phi_2$) catching up with the hour hand ($\phi_1$) on a watch, we devise the equations [Eq.~\eqref{eq:inner_angle1}] that the iteration number $ct_2$ of $u=U_B(\theta_2) U_A(\theta_1)$ and the two fixed angles $\theta_1,\theta_2$ therein should satisfy, so that the effect of $u^{ct_2}$ on the two Bloch spheres both become the identity transformation. The overall effect of $u^{ct_2}$ in $\mathcal{H}_0$ thus becomes a phase rotation about the initial state: $e^{-i\beta \ket{\psi_0}\bra{\psi_0}}$, where $\beta$ is some fixed angle to be determined in the next step.
        \item With some detailed analysis (Lemma~\ref{thm:inner_exist}) on Eq.~\eqref{eq:inner_angle1}, we show that it posses solutions $\theta_1,\theta_2$ when $ct_2$ is set according to Eq.~\eqref{eq:inner_iter1}, so that the intuition in step (ii) really works.
    \end{enumerate}
    By step (ii), we have actually obtained the $2$-dimensional invariant subspace $\mathcal{H}_1 = \text{span}\{\ket{\psi_0},\ket{T}\}$ of the quantum walk search iteration $u^{ct_2}R_T(\alpha)=e^{-i\beta \ket{\psi_0}\bra{\psi_0}} e^{i\alpha \ket{T}\bra{T}} =: S_r(\beta) S_o(\alpha)$, which is the generalized Grover's iteration $G(\alpha,\beta)$.
    
    
    \item We use the fixed-axis-rotation (FXR) method \cite{FXR}, in Section~\ref{sec:FXR}, to obtain the appropriate parameters $\alpha_1,\alpha_2$ of rotation $F_{xr,\beta}:=G(\alpha_2,\beta) G(\alpha_1,\beta)$ and its iteration number $t_1$ [see Eqs.~\eqref{eq:outer_iter1}-\eqref{eq:im1}], such that applying $F_{xr,\beta}^{t_1}$ to the initial state $\ket{\psi_0}$ produces precisely the target state $\ket{T}$.
\end{enumerate}

Problem \ref{problem:at_most_1} has also been studied by Portugal in \cite{Portugal_2018,Portugal_book}, where he improves the success probability of Algorithm $\mathcal{A}$ from constant to $1-O(N^{1/3})$, quickly approaching $1$, by careful analysis on the asymptotic behaviours of eigenvalues and eigenvectors of the quantum walk iterations $[(U_B(\pi) U_A(\pi))^{t_2}R_T(\pi)]^{t_1}$ when $N\to\infty$, and by setting the optimal iteration numbers $t_1,t_2$. 
However, asymptotic behaviours are not enough to design an exact algorithm. Therefore in step 3 to determine the effect of $u^{ct_2}$, we use Jordan's Lemma \cite{Jordan_1875} about common invariant subspaces of two reflections instead. 
Note that in our case we consider `reflections' with partial phase $\theta\neq \pi$. Jordan's Lemma has been widely used in the literature, such as the spectral lemma of quantum walk operator based on Markov chain \cite{Szegedy_03}, spectral decomposition of interpolated quantum walk operator \cite{Krovi2016}, and the qubitization technique in quantum singular value transformation \cite{QSVT}.

The promise in Problem~\ref{problem:at_most_1} that there is at most one colliding pair is in some sense necessary to design a fast and exact quantum algorithm, because without any promise on the number of equal elements in $x$, the deterministic quantum query complexity will be $\Omega(N)$. In fact, we can first reduce the problem of calculating the Boolean function $OR(y),\ y\in\{0,1\}^N$ to the element distinctness problem of a specific string $x\in \{0,1,\ldots,N\}^{N+1}$ such that $x_i=(1-y_i)i,\forall i\in[N]$ and $x_{N+1}=0$ \cite{quantum_alg1.2}. Then $OR(y)=1$ if and only if $x$ contains a colliding pair. This reduction tells us that the query complexity of the element distinctness problem must be greater than that of calculating the Boolean function $OR$. Secondly, the deterministic quantum query complexity of calculating $OR$ on the domain $\{0,1\}^N$ is shown to be $N$ by the polynomial method \cite{polynomial}. Therefore, any exact quantum algorithm solving the element distinctness problem without any promise on the number of colliding pairs must query the index oracle $O_x$ for $\Omega(N)$ times.

On the other hand, the deterministic quantum query complexity of Problem~\ref{problem:at_most_1} is $\Omega(N^{2/3})$, so that the $O(N^{2/3})$ in Theorem \ref{thm:intro} is actually $\Theta(N^{2/3})$!
Note, however, that the aforementioned $\Omega(N^{2/3})$ bounded-error quantum query complexity of the general element distinctness problem by the polynomial method {\it does not} imply the $\Omega(N^{2/3})$ lower bound of Problem~\ref{problem:at_most_1}, because the promised version is easier than the general version and thus the query lower bound smaller.
Nevertheless, we can use the adversary method instead, because the negative-weight adversary lower bound \cite{hoyer2007negative} of the decision version of Problem~\ref{problem:at_most_1} has been proved to be $\Omega(N^{2/3})$ \cite{belovs2012adversary,rosmanis2014quantum}. For completeness, we describe the proof of the $\Omega(N^{2/3})$ lower bound of Problem~\ref{problem:at_most_1} in Section \ref{sec:lower_bound}.

\section{\label{sec:alg}Description of the algorithm}
In this section, we present an exact quantum algorithm for Problem~\ref{problem:at_most_1} (see Algorithm~\ref{alg:main_block}).
\subsection{\label{subsec:johonson}Converting to quantum walk search on quasi-Johnson graph}

The quasi-Johnson graph $G$ underlying Algorithm~\ref{alg:main_block} has {\it vertex set} $V=\{(S,y):S\in S_r,\ y\in [N]\setminus S \}$, where  $S_r$ consists of all the $r$-sized subsets of $ [N]$. Index $y$ can be seen as the index to be swapped into $S$ in one step of quantum walk. $r$ is set to be
\begin{equation}\label{eq:r_def}
    r=\lfloor N^{{2}/{3}} \rfloor.
\end{equation}
An {\it edge} in $G$ connects two different vertices $(S,y)$ and $(S',y')$ if and only if $S=S'$ or $S\cup\{y\}=S'\cup\{y'\}$. 
Note that $|V|=\binom{N}{r}(N-r)$, which is not equal to $\binom{N}{r+1}$. Thus $G$ is not a real Johnson graph, whose edges connect two vertices, i.e. $(r+1)$-sized subsets, if and only if they differ in exactly one element.

Every vertex is equipped with $r+1$ slots of size $M$ to store the item values in string $x$ corresponding to the index set $S\cup\{y\}$. Thus the basis of the two working registers is of the following form:
\begin{equation*}
    \ket{S,y}\otimes \ket{x'_1,x'_2,\ldots,x'_{r+1}}.
\end{equation*}
Suppose we store the value $x'_i$ with $m$ qubits ($2^m\geq M$). Then the effect of the index oracle $O_x$ is
\begin{equation}\label{eq:oracle}
    O_x \ket{i}\ket{b}=\ket{i}\ket{b\oplus x_i},
\end{equation}
where $i\in [N]$ is the querying index, and $\oplus$ is bit-wise XOR on $m$ qubits. Thus $O^2=I$.

As vertex basis $\ket{S,y}\in \mathbb{C}^{\binom{N}{r}}\otimes \mathbb{C}^{ N-r } $ encodes the index set $S\cup\{y\}$, we will need a quantum subroutine $\mathcal{P}$ to obtain the ascending index sequence $(i_1,i_2,\ldots,i_r)$ corresponding to $S$, and the true index $\bar{y}\in [N]$ corresponding to $y$, so as to query the index oracle $O_x$. To be more precise, 
\begin{equation}\label{eq:SubroutineP}
    \mathcal{P}: \ket{S,y} \otimes \ket{b_1} \cdots \ket{b_r} \ket{b'} \mapsto \ket{S,y} \otimes \ket{b_1\oplus i_1} \cdots \ket{b_r\oplus i_r} \ket{b'\oplus \bar{y} }.
\end{equation}
Note that $\mathcal{P}^2=I$. The decoding procedure $\mathcal{P}$ might take a lot of space, but as we are only considering query complexity, this is not a problem.

The {\it edge set} of graph $G$ is actually determined by the two local diffusion operator $U_A(\theta_1)$ and $U_B(\theta_2)$, as the edges indicate the transiting restriction among vertices in a step of quantum walk $u=U_B(\theta_2) U_A(\theta_1)$. Operator $U_A(\theta_1)$ is defined as
\begin{equation}\label{eq:U_A_def}
    U_A(\theta_1) = I-(1-e^{i\theta_1}) \sum_{S\in S_r} \ket{A_S}\bra{A_S},
\end{equation}
where 
\begin{equation}\label{eq:As}
    \ket{A_S} = \frac{1}{\sqrt{N-r}} \sum_{y\in[N]\setminus y} \ket{S,y}
\end{equation}
is an equal-superposition of the clique $A_S=\{(S,y)\in V : y\in [N]\setminus S \}$ determined by $S$. Note that $|A_s|=N-r$ and there are $\binom{N}{r}$ cliques of this kind. From the definition of $U_A(\theta_1)$, we know that it transits a vertex to a superposition of vertices in the same clique, but never to vertex in other clique. Therefore, $U_A(\theta_1)$ corresponds to applying Grover diffusion operator in every clique $A_S$.

The other diffusion operator $U_B(\theta_2)$ is the combined effect of steps 3.(b).(ii)-(iv) in Algorithm~\ref{alg:main_block} on the first register. The operator $U_B^{\text{EXT}}(\theta_2)$ in step 3.(b).(iii) is a bit complicated, and is defined as
\begin{equation}\label{eq:U_B_def}
    U_B^{\text{EXT}}(\theta_2) = I- (1-e^{i\theta_2}) \sum_{(x_1',\ldots,x_{r+1}')} \sum_{[S,y]} \ket{B_{[S,y]}^{(x_1',\ldots,x_{r+1}')}} \bra{B_{[S,y]}^{(x_1',\ldots,x_{r+1}')}},
\end{equation}
where the first sum is over all the possible string $(x_1',\ldots,x_{r+1}')\in [M]^{r+1}$, and the second sum is over all the representatives $[S,y]$ of clique $B_{[S,y]}=\{ (S',y') : \{y'\}\cup S'=\{y\}\cup S \}$. Note that $|B_{[S,y]}|=r+1$ and there are $\binom{N}{r+1}$ of this kind.
The state $\ket{B_{[S,y]}^{(x_1',\ldots,x_{r+1}')}}$ is determined by $[S,y]$ and $(x_1',\ldots,x_{r+1}')$, and is defined as 
\begin{equation}
    \ket{B_{[S,y]}^{x_1',\ldots,x_{r+1}'}} = \frac{1}{\sqrt{r+1}} \sum_{y'\in S\cup \{y\}} \ket{S\cup\{y\}\setminus\{y'\},y'}\ket{\pi(x_1'),\ldots,\pi(x_{r+1}')},
\end{equation}
where the permutation $\pi$ on the second register is related to the permutation of indices in the first register. This relation is best illustrated by an example: suppose $[S,y]=[\{1,4\},3]$, then 
\begin{equation}\label{eq:example}
    \ket{B_{[S,y]}^{(x_1',x_2',x_3')}} = \frac{1}{\sqrt{3}} \Big( \ket{\{1,4\},3}\ket{x_1',x_2',x_3'} + \ket{\{3,4\},1}\ket{x_3',x_2',x_1'} + \ket{\{1,3\},4}\ket{x_1',x_3',x_2'} \Big).
\end{equation}
Compared to the first term in Eq.~\eqref{eq:example}, the third term exchanged the index $4$ and $3$. Therefore $\pi$ exchanges the corresponding $x_2'$ and $x_3'$ in the second register. The second term is obtained in a similar fashion. 
Although the indices in $S$ are unordered, we regard them to be in the ascending order so that $\ket{B_{[S,y]}^{x_1',\ldots,x_{r+1}'}}$ is well-defined. For example, $x_3'$ in the second term of Eq.~\eqref{eq:example} is related to index $3$ but not $4$. 
Combining this and the ascending order of the output sequence of $\mathcal{P}$, we know that during the whole procedure of Algorithm~\ref{alg:main_block}, $\ket{x_{i_1},\ldots,x_{i_r}}$ in the second register as the subset of $x=(x_1,\ldots,x_N)$ is always related to the ascending index sequence $(i_1,\ldots,i_r)$ of $\ket{S}$ in the first register. 
Thus we can omit the second register when analyzing the algorithm. Moreover, in the final measurement of Algorithm~\ref{alg:main_block}, the two registers will collapse to the entangled state $\ket{S,y}\ket{x_{i_1},\ldots,x_{i_r}}$. Therefore, we can obtain indices of the colliding pair without further quires.
Thus we have shown, the combined effect of steps 3.(b).(ii)-(iv) in Algorithm~\ref{alg:main_block} can be seen as the following operator acting on the first register.
\begin{equation}
    U_B(\theta_2) = I- (1-e^{i\theta_2}) \sum_{[S,y]} \ket{B_{[S,y]}} \bra{B_{[S,y]}},
\end{equation}
where
\begin{equation}
    \ket{B_{[S,y]}} = \frac{1}{\sqrt{r+1}} \sum_{y'\in S\cup \{y\}} \ket{S\cup\{y\}\setminus\{y'\},y'}.
\end{equation}

The {\it marked vertices} $(S,y)$ of graph $G$ are those containing the colliding pair's indices $(i_1,i_2)$ in $S$, and the related marking operator $R_T(\alpha)$ multiplies $\ket{S,y}\ket{x'_1,\ldots,x'_{r},0}$ by $e^{i\alpha}$ if the second register contains the colliding pair. Note that $R_T(\alpha)$ may be time consuming, but it does not need to query the index oracle $O_x$.

An example of the quasi-Johnson graph when $N=5,r=2$ and $(1,3)$ being the indices of the colliding pair in $x=(x_1,\ldots,x_5)$ is shown in Fig.~\ref{fig:johnson}. 
It gives us a visual impression of what the quasi-Johnson graph $G$ with its two kinds of cliques $A_S$ and $B_{[S,y]}$ would look like.

\begin{figure}
    \centering
    \includegraphics[width=0.5\textwidth]{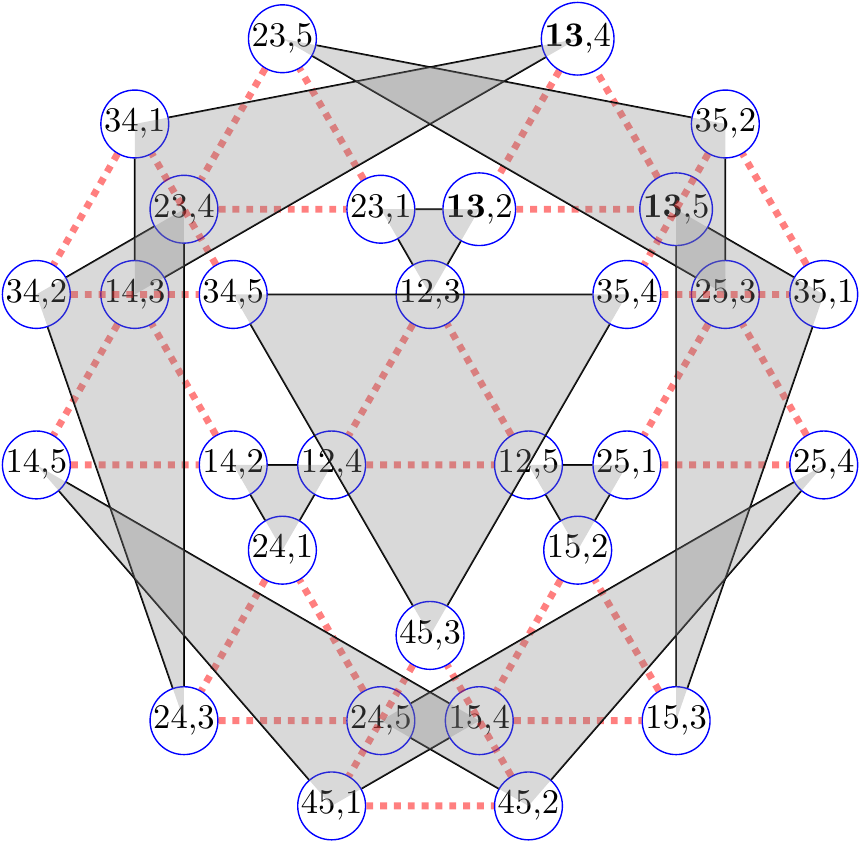}
    \caption{The quasi-Johnson graph $G$ when $N=5$ and $|S|=r=2$. Note that $\binom{N}{r} (N-r) = \binom{N}{r+1} (r+1) = 10\times 3$, so that the two types of cliques $A_S$ and $B_{[S,y]}$ are of the same size $3$ and of the same quantity $10$. The vertex $(S,y)=(\{1,2\},3)$ in $G$, for example, is represented by the blue circle with shorthand ``$12,3$'' inside. The cliques $A_S$ are represented by equilateral triangles with red dotted edges. The cliques $B_{[S,y]}$ are represented by shaded gray triangles with black solid edges, $4$ of them are equilateral and the other $6$ triangles are obtuse. Suppose $(1,3)$ is the indices of the colliding pair in $x=(x_1,\ldots,x_5)$. Then the three nodes in the upper right corner with bold numbers ``$\textbf{13}$'' inside are the marked vertices.}
    \label{fig:johnson}
\end{figure}

If we denote the equal-superposition of all the marked vertices by $\ket{T}$, and the equal-superposition of all vertices by $\ket{\psi_0} := \frac{1}{\sqrt{|V|}} \sum_{(S,y)\in V} \ket{S,y}$, then the goal of quantum walk search on quasi-Johnson graph $G$ is to transform $\ket{\psi_0}$ to $\ket{T}$ by applying an appropriate sequence consists of $U_A(\theta_1),U_B(\theta_2)$ and $R_T(\alpha)$.

\begin{algorithm}[htb]
    \SetKwFunction{}{}
    \SetKwInOut{KWProcedure}{Procedure}
    \SetKwInput{Runtime}{Runtime}
    \caption{Exact quantum algorithm for Problem~\ref{problem:at_most_1}}
    \label{alg:main_block}
    \LinesNumbered

    \KwIn {index oracle $O_x$ given in Problem~\ref{problem:at_most_1} [effect shown in Eq.~\eqref{eq:oracle}]}
    \KwOut {indices $(i_1,i_2)$ of the colliding pair; or ``all distinct”.}
    \KWProcedure{}
    
    \begin{enumerate}[1.]
    \item Prepare the equal-superposition  of all vertices $\ket{\psi_0} = \frac{1}{\sqrt{|V|}} \sum_{(S,y)\in V} \ket{S,y}$ in the first register.
    
    \item Call decoding procedure $\mathcal{P}$ on $\ket{\psi_0}$ to obtain the ascending index sequence $(i_1,\ldots,i_r)$ of $S$, and query $O_x$ for $r$ times to obtain $\frac{1}{\sqrt{|V|}} \sum_{(S,y)\in V} \ket{S,y}\ket{x_{i_1},\ldots,x_{i_r},0} $. Then apply $\mathcal{P}$ again to erase the ancillary register storing the index sequence.

    \item Iterate the following steps for $t_1$ times:
    \begin{enumerate}[(a)]
        \item Apply the marking operator $R_T(\alpha_1)$ which multiplies $\ket{S,y}\ket{x'_1,\ldots,x'_{r},0}$ by $e^{i\alpha_1}$ if the second register contains the colliding pair.
        \item Iterate the following steps for $ct_2$ times:
        \begin{enumerate}[(i)]
            \item Apply $U_{A}(\theta_1)$ [see Eq.~\eqref{eq:U_A_def}] to the first register.
            \item Call $\mathcal{P}$ on $\ket{S,y}$ to obtain $\bar{y}$, and query $O_x$ with $\bar{y}$, storing the answer in the last slot of the second register $\ket{x'_{r+1}}$. Then apply $\mathcal{P}$ again to erase the ancillary register storing the index sequence.
            \item Apply $U_B^{\text{EXT}}(\theta_2)$ [see Eq.~\eqref{eq:U_B_def}] on the two working registers.
            \item Repeat step (ii) to set $\ket{x'_{r+1}}$ back to $\ket{0}$.
        \end{enumerate}
        \item Apply the marking operator $R_T(\alpha_2)$ with a different angle $\alpha_2$.
        \item Repeat step (b).
    \end{enumerate}
    \item Measure the two working registers in computational basis, and based on the outcome $S$ and $(x_{i_1'},\ldots,x_{i_r'})$, obtain indices $(i_1,i_2)$ of the colliding pair; or output ``all distinct” if $(x_{i_1'},\ldots,x_{i_r'})$ are all distinct.
    \end{enumerate}
    
    p.s. The $6$ parameters $t_1,\alpha_1,\alpha_2,ct_2,\theta_1,\theta_2$ is set according to Eqs.~\eqref{eq:inner_iter1}-\eqref{eq:cos_phi_def}.
    
\end{algorithm}

\subsection{Parameters in the algorithm}
Algorithm~\ref{alg:main_block} contains $6$ parameters $\alpha_1,\alpha_2,\theta_1,\theta_2,t_1,c\cdot t_2$ to be set below. We justify their values in the next section.
The iteration  number of the inner loop [step 3.(b)] is
\begin{equation}\label{eq:inner_iter1}
    c\cdot t_2 = 10\cdot \lceil \frac{\pi}{2}\sqrt{r} \rceil.
\end{equation}
The rotation parameters $\theta_1,\theta_2$ in the inner loop are solutions to the following equations:
\begin{equation}\label{eq:inner_angle1}
    \cos\phi_1 = -\cos(1-\frac{2}{c})\frac{\pi}{t_2}, \quad \cos\phi_2 = -\cos\frac\pi{t_2},
\end{equation}
where $\cos\phi_i = \cos\frac{\theta_1+\theta_2}{2} + 2\sin\frac{\theta_1}{2} \sin\frac{\theta_2}{2}\, \lambda_i$, and
$\lambda_i=\frac{i(N+1-i)}{(N-r)(r+1)}$.
Let $\beta :=-( c\cdot t_2 \frac{\theta_1+\theta_2}{2} \mod 2\pi)$, $\lambda=\frac{r(r-1)}{N(N-1)}$, and $\phi_0 := |4\arcsin(\sqrt{\lambda}\sin\frac{\beta}{2}) \mod [-\frac{\pi}{2},\frac{\pi}{2}]|$, where the notation ``$x \mod [-\frac{\pi}{2},\frac{\pi}{2}]$'' means  adding $x$ with an appropriate integer multiples $l$ of $\pi$, such that $x+l\pi\in [-\frac{\pi}{2},\frac{\pi}{2}]$. Then the iteration number of the outer loop [step 3] is
\begin{equation}\label{eq:outer_iter1}
    t_1 = \lceil \frac{\pi}{\phi_0} \rceil.
\end{equation}
The angle parameters $\alpha_1,\alpha_2$ in the outer loop are solutions to the following equations:
\begin{align}
    0 &= 1- 2\lambda \frac{\tan(t_1\phi)}{\sin\phi} ( -\sin\frac{\alpha_1}{2} \cos\frac{\alpha_2}{2} \sin\beta + 2\sin\frac{\alpha_1}{2} \sin\frac{\alpha_2}{2} \sin^2\frac{\beta}{2} (1-2\lambda) ), \label{eq:real1}\\
    0 &=-(1-2\lambda)\sin\beta \tan\frac{\alpha_1}{2} \tan\frac{\alpha_2}{2} + \cos\beta \tan\frac{\alpha_1}{2} +(2\lambda+(1-2\lambda)\cos\beta) \tan\frac{\alpha_2}{2}  +\sin\beta \label{eq:im1},
\end{align}
where $\phi$ satisfies
\begin{align}\label{eq:cos_phi_def}
    \cos{\phi} &= \cos(\frac{\alpha_1+\alpha_2}{2}+\beta) + 2\lambda (\sin\frac{\alpha_1+\alpha_2}{2} \sin\beta -4\sin\frac{\alpha_1}{2} \sin\frac{\alpha_2}{2} \sin^2\frac{\beta}{2}) 
    \notag \\ &\qquad
    + 8\lambda^2 \sin\frac{\alpha_1}{2} \sin\frac{\alpha_2}{2} \sin^2\frac{\beta}{2}.
\end{align}

\section{Analysis of the algorithm}
We first analyze the query complexity of Algorithm~\ref{alg:main_block}. From Eq.~\eqref{eq:outer_iter1}, the iteration number $t_1$ of the outer loop is related to angle $\beta/2$. Note that $\frac{\beta}{2}$ tends to a constant which is irrelevant to $N$ when $N\to\infty$ from Eq.~\eqref{eq:infty_beta}. Therefore, the inner loop is iterated for $2t_1\in O(N/r)$ times. As the inner loop queries $O_x$ for $2ct_2\in O(\sqrt{r})$ times, and the preparation of $\ket{\psi_0}$ needs $r$ queries, the total query complexity is $O(N/\sqrt{r}+r)=O(N^{2/3})$ by Eq.~\eqref{eq:r_def}.

We now analyze the correctness of Algorithm~\ref{alg:main_block}. Note that $\ket{\psi_0}$ remains unchanged when the items of $x$ are all distinct, because $R_T(\alpha)$ now becomes the identity operator $I$, and $\ket{\psi_0}$ is invariant under $U_A(\theta_1)$ and $U_B(\theta_2)$. Hence, Algorithm~\ref{alg:main_block} is correct for this case. We thus assume that $x$ contains a single colliding pair in the following three subsections.

\subsection{Reducing to $5$-dimensional subspace}\label{sec:reduction}
The reduced $5$-dimensional subspace $\mathcal{H}_0$ is spanned by five equal-superpositions of the vertices in five different groups: $\eta_0^0,\eta_0^1,\eta_1^0,\eta_1^1,\eta_2^0$, where $\eta_l^j = \{(S,y)\in V: |S\cap K|=l,\, |\{y\}\cap K|=j \}$. That is, we group the vertices of $G$ into five different groups by the $5$ different situations that $(S,y)$ intersects with indices $\{i_1,i_2\}=:K$ of the colliding pair. The size of each groups, by basic combinatorics, is
\begin{equation}\label{eq:set_size}
    |\eta_l^0|=\binom{k}{l}\binom{N-k}{r-l}(N-r-(k-l)), \quad |\eta_l^1|=\binom{k}{l}\binom{N-k}{r-l}(k-l),
\end{equation}
where $k=2$ is the size of $K$. We denote by $\ket{T}$ the target state $\ket{\eta_2^0}$, as all vertices in this group contains indices $\{i_1,i_2\}$ of the colliding pair. Clearly $\ket{\psi_0}\in\mathcal{H}_0$, and the square root of the proportion of targets to all vertices is
\begin{equation}\label{eq:psi_0}
    \braket{T|\psi_0} = \sqrt{\frac{r(r-1)}{N(N-1)}}.
\end{equation}

We now show that the effect of step 3.(b) of Algorithm~\ref{alg:main_block} in $\mathcal{H}_0$ is
\begin{equation}\label{eq:u}
    u = U_B(\theta_2) U_A(\theta_1) = (I-(1-e^{i\theta_2})BB^\dagger) \cdot ( I-(1-e^{i\theta_1})AA^\dagger ),
\end{equation}
where
\begin{equation}
    A\circ A=
\begin{bmatrix}
{1-\frac{2}{N-r}} & 0 & 0 \\
{\frac{2}{N-r}} & 0 & 0 \\
0 & {1-\frac{1}{N-r}} & 0 \\
0 & {\frac{1}{N-r}} & 0 \\
0 & 0 & 1
\end{bmatrix},
\quad B\circ B=
\begin{bmatrix}
1 & 0 & 0 \\
0 & \frac{1}{r+1} & 0 \\
0 & 1-\frac{1}{r+1} & 0 \\
0 & 0 & \frac{2}{r+1} \\
0 & 0 & 1-\frac{2}{r+1}
\end{bmatrix}.
\end{equation}
The  notation $\circ$ here is the Hadamard (element-wise) product. We use it because the square root notation $\sqrt{*}$ takes up too much space.

\textbf{Proof (Calculation):} Consider $U_A(\theta_1) = I-(1-e^{i\theta_1}) \sum_{S\in S_r} \ket{A_S}\bra{A_S}$ first. 
Recall from Eq.~\eqref{eq:As} that $\ket{A_S} = \frac{1}{\sqrt{N-r}} \sum_{y\in[N]\setminus y} \ket{S,y}$ is determined by $S$. Thus $\braket{A_S | \eta_l^j} \neq 0$ only if $|S\cap K|=l$, where $l\in \{0,1,2\}$. 
To determine its value, note that $\ket{S,y}$ in $\ket{\eta_l^j}$ has to have the same $S$ as that of $A_S$ to produce nonzero inner-product. Therefore, we have 
\begin{align}
\braket{A_S | \eta_l^j} &= \frac{\delta_{|S\cap K|,l}}{\sqrt{N-r}\sqrt{|\eta_l^j|}} 
\times\begin{cases}
N-r-(k-l), & j=0 \\
k-l, & j=1
\end{cases}\\
&= \frac{\delta_{|S\cap K|,l}}{\sqrt{N-r} \sqrt{\binom{k}{l} \binom{N-k}{r-l}}}
\times \begin{cases}
\sqrt{N-r-(k-l)}, & j=0 \\
\sqrt{k-l}, & j=1
\end{cases}. \label{eq:sumSr}
\end{align}
We now consider the sum $\sum_{S\in S_r} \braket{\eta_{l'}^{j'} | A_S} \braket{A_S | \eta_l^j}$. Note that it's nonzero only if $l=l'$, and thus the sum now reduces to be over those $S\in S_r$ that satisfy $|S\cap K|=l$. There are $\binom{k}{l} \binom{N-k}{r-l}$ of this kind, cancelling the denominator in Eq.~\eqref{eq:sumSr}. Hence, the effect of $U_A(\theta_1)$ in its invariant subspace $\{\ket{\eta_l^0}, \ket{\eta_l^1}\}$ is
\begin{equation}
    I-(1-e^{i\theta_1})
\begin{bmatrix}
\sqrt{1-\frac{k-l}{N-r}}\\
\sqrt{\frac{k-l}{N-r}}
\end{bmatrix}
\begin{bmatrix}
\sqrt{1-\frac{k-l}{N-r}}, &
\sqrt{\frac{k-l}{N-r}}
\end{bmatrix},
\end{equation}
which degenerates to the scalar multiplication $e^{i\theta_1} \ket{\eta_k^0}\bra{\eta_k^0}$ when $l=k$. The effect of $U_A(\theta_1)$ in $\mathcal{H}_0$ as shown in Eq.~\eqref{eq:u} now follows easily. The effect of $U_B(\theta_2)$ can be determined in a similar fashion. \hfill $\blacksquare$

The effect of $R_T(\alpha)$ in $\mathcal{H}_0=\text{span} \{ \ket{\eta_0^0} , \ket{\eta_0^1} , \ket{\eta_1^0}, \ket{\eta_1^1}, \ket{\eta_2^0}  \}$ that appeared in step 3.(a) and 3.(c) of Algorithm~\ref{alg:main_block} is obviously
\begin{equation}\label{eq:Rt}
    R_T(\alpha)=\text{diag}\left( [1,1,1,1,e^{i\alpha}] \right).
\end{equation}
Now the goal of Algorithm~\ref{alg:main_block}, in a nutshell, is to achieve the following equation
\begin{equation}
    1 = \bra{T} \big(u^{c\,t_2} R_T(\alpha_2) \cdot u^{c\,t_2} R_T(\alpha_1) \big)^{t_1} \ket{\psi_0},
\end{equation}
by appropriately setting the parameters therein.

\subsection{Constructing phase rotation about the initial state}\label{sec:clock}

In this subsection, we show how to construct the following phase rotation $R_{\psi_0}(\beta)$ about the initial state $\ket{\psi_0}$, by iterating the quantum walk operator $u= U_B(\theta_2) U_A(\theta_1)$ for $c\cdot t_2$ times.
\begin{equation}
    R_{\psi_0}(\beta) := I-(1-e^{i\beta}) \ket{\psi_0}\bra{\psi_0}
\end{equation}
We first justify the values of $\theta_1,\theta_2$ and $c\cdot t_2$ as shown respectively in Eq.~\eqref{eq:inner_angle1} and Eq.~\eqref{eq:inner_iter1} by the following lemma.

\begin{lemma}\label{lem:inner_effect}
By setting the parameters $\theta_1,\theta_2$ in $u= U_B(\theta_2) U_A(\theta_1)$ and $c\cdot t_2$ according to Eq.~\eqref{eq:inner_angle1} and Eq.~\eqref{eq:inner_iter1} respectively, the effect of $u^{ct_2}$ in $\mathcal{H}_0$ is the following:
\begin{equation}\label{eq:psi_rotate}
    I-(1- e^{i\,ct_2\frac{\theta_1+\theta_2}{2}} )\ket{\psi_0}\bra{\psi_0}.
\end{equation}
\end{lemma}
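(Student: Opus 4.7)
The plan is to apply Jordan's Lemma to the two partial-reflection projectors $P_A=\sum_S \ket{A_S}\bra{A_S}$ and $P_B=\sum_{[S,y]}\ket{B_{[S,y]}}\bra{B_{[S,y]}}$ restricted to $\mathcal{H}_0$, yielding a simultaneous orthogonal decomposition of $\mathcal{H}_0$ into common invariant subspaces of dimension at most two for $P_A,P_B$ (and hence for $U_A(\theta_1),U_B(\theta_2)$). A direct calculation using the $5\times 3$ matrices $A,B$ from Eq.~\eqref{eq:u} shows that $\ket{\psi_0}$ lies in the range of both $P_A$ and $P_B$, so $\mathrm{span}\{\ket{\psi_0}\}$ is a 1D common invariant subspace on which $U_A(\theta_1)$ and $U_B(\theta_2)$ act as multiplication by $e^{i\theta_1}$ and $e^{i\theta_2}$ respectively. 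The 4-dimensional orthogonal complement of $\ket{\psi_0}$ inside $\mathcal{H}_0$ must therefore split into two 2-dimensional common invariant subspaces $V_1,V_2$.

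Next I would identify these two blocks explicitly. By computing the singular values of the $3\times 2$ overlap $A^\dagger B$ restricted to the $\ket{\psi_0}$-orthogonal part of $\mathcal{H}_0$ using the entries of $A,B$, one checks that they equal $\cos\mu_1,\cos\mu_2$ with $\sin^2\mu_i=\lambda_i=\frac{i(N+1-i)}{(N-r)(r+1)}$. On each 2D block $V_i$, $P_A$ and $P_B$ are then rank-one projections onto unit vectors at angle $\mu_i$, and a standard trace-and-determinant computation on a $2\times 2$ product of two partial reflections gives the eigenvalues of $u=U_B(\theta_2)U_A(\theta_1)$ on $V_i$ as $e^{i(\theta_1+\theta_2)/2\pm i\phi_i}$, where $\cos\phi_i$ is precisely the expression appearing in Eq.~\eqref{eq:inner_angle1} with parameter $\lambda_i$. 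Tracing the Jordan structure of $\mathcal{H}_0$ back to the explicit matrices $A,B$ and recognising the residual $2\times 2$ overlap pattern as having singular values $\sqrt{1-\lambda_i}$ is the step I expect to be the main technical obstacle; everything else is essentially standard.

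With the eigenvalues of $u$ in hand, the ``clock'' argument from the introduction completes the proof. Raising $u$ to the $ct_2$-th power on $V_i$ gives eigenvalues $e^{i\,ct_2(\theta_1+\theta_2)/2}\cdot e^{\pm i\,ct_2\phi_i}$. Substituting the prescribed $\cos\phi_1=-\cos((1-2/c)\pi/t_2)$ and $\cos\phi_2=-\cos(\pi/t_2)$ yields $\phi_1=\pi(1-1/t_2+2/(ct_2))$ and $\phi_2=\pi(1-1/t_2)$, and with $c=10$ and $ct_2=10\lceil\pi\sqrt{r}/2\rceil$ a short parity check shows that both $ct_2\phi_i/(2\pi)$ are integers of the same parity. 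Hence $u^{ct_2}$ acts as the same scalar $e^{i\,ct_2(\theta_1+\theta_2)/2}$ on each of $V_1,V_2$, while on $\ket{\psi_0}$ it multiplies by $e^{i\,ct_2(\theta_1+\theta_2)}$. Factoring out the common global phase $e^{i\,ct_2(\theta_1+\theta_2)/2}$ (immaterial for all subsequent measurements) leaves exactly $I-(1-e^{i\,ct_2(\theta_1+\theta_2)/2})\ket{\psi_0}\bra{\psi_0}$, as claimed. The existence of real $\theta_1,\theta_2$ genuinely solving Eq.~\eqref{eq:inner_angle1} for the prescribed $\phi_1,\phi_2$ is a separate matter which I would invoke from Lemma~\ref{thm:inner_exist} rather than reprove here.
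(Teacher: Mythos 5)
Your proposal is correct and follows essentially the same route as the paper: Jordan's lemma applied to the discriminant $A^\dagger B$ yields the decomposition $\mathrm{span}\{\ket{\psi_0}\}\oplus V_1\oplus V_2$, your trace--determinant computation of the eigenvalues $e^{i(\theta_1+\theta_2)/2\pm i\phi_i}$ on $V_i$ is equivalent to the paper's Bloch-sphere rotation-composition argument, and the integrality of $ct_2\phi_i/(2\pi)$ (namely $5t_2-4$ and $5(t_2-1)$) collapses $u^{ct_2}$ to the claimed phase rotation exactly as in the paper. One immaterial slip: those two integers have \emph{opposite} parity, not the same, but only their integrality is used, so the conclusion stands.
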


\textbf{Attention:} Eq.~\eqref{eq:psi_rotate} is defined in the subspace $\mathcal{H}_0$, but not in the whole space spanned by all the vertices of $G$. In fact, to construct Eq.~\eqref{eq:psi_rotate} in the whole space, we claim that one must iterate $u$ for at least $r$ times rather than $ct_2\in O(\sqrt{r})$ times:
if $S_1\cup \{y_1\}$ and $S_2\cup \{y_2\}$ are disjoint (this is possible, since $|S|=r=N^{2/3}\ll N$), then one needs to iterate $u$ for at least $r$ times in order to `transform' $\ket{\psi}$ to $\ket{\varphi}$. `Transform' here means that after $\ket{\psi}$ is changed to some superposition, $\ket{\varphi}$ appears in this superposition. 
This is because $U_A(\theta_1)$ and $U_B(\theta_2)$ are diffusion operators within clique $A_S$ and $B_{[S,y]} $ respectively. Thus a step of quantum walk $u=U_B(\theta_2) U_A(\theta_1)$ will only update one index in $S$ for a fixed $(S,y)$.
Eq.~\eqref{eq:psi_rotate} defined in the whole space can certainly `transform' $\ket{\psi}$ to $\ket{\varphi}$ in one step, and thus the claim follows.

\begin{proof}
Similar to the spectral lemma in \cite{Szegedy_03}, we first perform singular value decomposition on the discriminant matrix $D:=A^\dagger B$ of order $3$ as $D=WSV^\dagger = \sum_{i=0}^2 \lambda_i \ket{w_i}\bra{v_i}$. The square $\lambda_i^2$ of the $3$ singular values are
\begin{equation}
     \left[1,\ \left(1-\frac{1}{N-r}\right) \left(1-\frac{1}{r+1}\right),\ \left(1-\frac{2}{N-r}\right) \left(1-\frac{2}{r+1}\right) \right]
     =:
     [\cos^2\frac{\gamma_0}{2},\cos^2\frac{\gamma_1}{2},\cos^2\frac{\gamma_2}{2}].
     \footnote{The singular values shows a particular pattern, i.e. $\cos^2\frac{\gamma_i}{2} = (1-i/(N-r))\cdot(1-i/(r+1)) $, for all the $D=A^\dagger B$ with order $k\geq 1$. We are not able to prove this, and have posted it as an open question in \cite{web_question}.}
\end{equation}
Then we have the following two key equations, which lies at the heart of Jordan's lemma \cite{Jordan_1875}:
\begin{align}
& \Pi_A \cdot B\ket{v_i} := AA^\dagger B\ket{v_i} = AD\ket{v_i}= \cos\frac{\gamma_i}{2} \cdot A \ket{w_i}, \\
& \Pi_B \cdot A\ket{w_i} := BB^\dagger A\ket{w_i} = B(\bra{w_i}D)^\dagger = \cos\frac{\gamma_i}{2} \cdot B \ket{v_i}.
\end{align}
It can be verified that $A\ket{w_0} = B\ket{v_0} = \ket{\psi_0}$. Thus we obtain the three {\it common} invariant subspaces $\{ \ket{\psi_0} \}$, $\text{span}\{A\ket{w_1}, B\ket{v_1} \}$ and $\text{span}\{A\ket{w_2}, B\ket{v_2} \}$ of
\begin{equation}\label{eq:u}
    u=(I-(1-e^{i\theta_2})\Pi_B) \cdot ( I-(1-e^{i\theta_1})\Pi_A ).
\end{equation}

The effect of $u$ in the subspace $\text{span}\{A\ket{w_i}, B\ket{v_i} \}$ ($2$-dimensional space) is the composition of two 3D-rotations: a rotation about $A\ket{w}$ by angle $-\theta_1$, followed by a rotation about $B\ket{w}$ by $-\theta_2$, and the angle between the two rotation axes is $\gamma_i$. We articulate this in the following.
As $\bra{w_i}A^\dagger B\ket{v_i} =\cos\gamma_i/2$, we might as well let $B\ket{v_i} = \cos\gamma_i/2 \ A\ket{w_i} + \sin\gamma_i/2\ A\ket{w_i}^\perp$, where $A\ket{w_i}^\perp$ is perpendicular to $A\ket{w_i}$ and lies in the same qubit space. Simple calculation shows that the effect of $U_A(\theta_1)$ on the orthonormal basis $\{A\ket{w_i},A\ket{w_i}^\perp\}$ is
\begin{equation}
    U_A(\theta_1) = 
\begin{bmatrix}
e^{i\theta_1} & 0 \\
0 & 1
\end{bmatrix}
= e^{i\theta_1/2} R_{Aw_i}(-\theta_1/2).
\end{equation}
The notation
\begin{equation}\label{eq:rotate_def}
    R_{\vec{n}}(\phi) := \cos\phi I -i\sin\phi\, \vec{n} \cdot \vec{\sigma}
\end{equation}
represents a rotation on the Bloch sphere about axis $\vec{n}$ by angle $2\phi$ obeying the right-handed rule, and $\vec{n} \cdot \vec{\sigma} := n_x X+n_y Y+n_z Z$, where $X,Y,Z$ are the three Pauli matrices. $U_A(\theta_1)$ thus represents a rotation about $Aw_i=[0,0,1]$ by angle $-\theta_1$. Similarly, the effect of $U_B(\theta_2)$ on the orthonormal basis $\{A\ket{w_i},A\ket{w_i}^\perp\}$ is
\begin{equation}
    U_B(\theta_2)=
\begin{bmatrix}
1-(1-e^{i\theta_2}) \cos^2 \frac\gamma 2 & -(1-e^{i\theta_2}) \sin\frac\gamma 2 \cos\frac\gamma 2 \\
-(1-e^{i\theta_2}) \sin\frac\gamma 2 \cos\frac\gamma 2 & 1-(1-e^{i\theta_2}) \sin^2 \frac\gamma 2
\end{bmatrix}
= e^{i\theta_2/2} R_{Bv_i}(-\theta_2/2),
\end{equation}
which represents a rotation about $Bv_i=[\sin\gamma_i,0,\cos\gamma_i]$ by angle $-\theta_2$. Thus the angle between the two rotation axes $Aw_i$ and $Bv_i$ is $\gamma_i$.

By the composition lemma of two rotations (Exercise~4.15 in \cite{nielsen_chuang_2010}), the effect of $u$ (shown in Eq.~\eqref{eq:u}) in the subspace $\text{span}\{A\ket{w_i}, B\ket{v_i} \}$ as the composition of two rotations is another rotation $e^{i\frac{\theta_1+\theta_2}{2}} R_{\vec{n}}(\phi_i)$, whose rotation angle $\phi_i$ satisfies:
\begin{equation}\label{eq:phi_i}
    \cos\phi_i = \cos\frac{\theta_1+\theta_2}{2} + 2\sin\frac{\theta_1}{2} \sin\frac{\theta_2}{2}\, \lambda_i,
\end{equation}
where $\lambda_i:=\sin^2\frac{\gamma_i}{2}=\frac{i(N+1-i)}{(N-r)(r+1)}$. Note that Eq.~\eqref{eq:inner_angle1} is equivalent to
\begin{equation}\label{eq:inner_angle2}
    \phi_1 = \pi \pm (1-\frac{2}{c})\frac{\pi}{t_2}, \quad \phi_2 = \pi \pm \frac{\pi}{t_2}.
\end{equation}
Therefore $c t_2 \phi_i \equiv 0 \mod 2\pi$ ( $c=10$ is even). From the composition lemma of two rotations and mathematical induction, we have
\begin{equation}\label{eq:rotate_composition}
    R_{\vec{n}}^k(\phi) = R_{\vec{n}}(k\phi).
\end{equation}
Note that $R_{\vec{n}}(\phi)=I$ when $\phi \equiv 0 \mod 2\pi$. Hence, the effect of $u^{ct_2}$ on the three invariant subspace $\{ \ket{\psi_0} \}$, $\text{span}\{A\ket{w_1},A\ket{w_1}^\perp\}$ and $\text{span}\{A\ket{w_2},A\ket{w_2}^\perp\}$, up to the irrelevant global phase $e^{i\, ct_2\frac{\theta_1+\theta_2}{2}}$, are $e^{i\, ct_2\frac{\theta_1+\theta_2}{2}}I,\ I$ and $I$, respectively. This completes the proof of Lemma~\ref{lem:inner_effect}.
\end{proof}

The intuition behind Eq.~\eqref{eq:inner_angle2} can be interpreted as the minute hand ($\phi_2$) catching up with the hour hand ($\phi_1$) on a watch [considering the case that `$\pm$' is `$+$' in Eq.~\eqref{eq:inner_angle2}].
As $\cos\phi_i$ is linear in $\lambda_i$, the rotation angle $\phi_i$ will never be equal since $\lambda_1$ and $\lambda_2$ differs. We let the difference between $\phi_1$ and $\phi_2$ to be $\frac{2\pi}{ct_2}$, and let the difference between $\phi_2$ and $\pi$ to be $\frac{\pi}{t_2}$,
so that after $ct_2$ iterations, $\phi_1$ is slower than $\phi_2$ for an entire cycle, and $\phi_2$ is faster than $\pi$ for $c/2$ cycle. Hence, $c t_2 \phi_i \equiv 0 \mod 2\pi$ as desired.
More importantly, setting iteration number $ct_2$ according to Eq.~\eqref{eq:inner_iter1} guarantees that the equations \eqref{eq:inner_angle2} about $\theta_1,\theta_2$ have a solution. This is shown in the following lemma.

\begin{lemma}\label{thm:inner_exist}
When $N\geq 5, c=10$, and $t_2 \geq \frac{\pi}{2} \sqrt{r}$,
there exist solutions $\theta_1,\theta_2$ to Eq. \eqref{eq:inner_angle1}. One of the solutions satisfies $\frac{\theta_1+\theta_2}{2} =\pi - d \frac\pi{t_2}$, where the intermediate variable $d$ has two properties: $cd\pi$ is not an integer multiple of $2\pi$; and when $N\to \infty$, $d \to \sqrt{c^2-8(c-1)}/c =\sqrt{7}/5$.
This solution can be obtained by solving Eq.~\eqref{eq:first} first for $d$, and then substitute it into Eq.~\eqref{eq:second} to obtain $\theta_1$, and finally let $\theta_2=2(\pi-d\pi/t_2)-\theta_1$.
\end{lemma}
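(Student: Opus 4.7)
The plan is to reduce the pair of equations \eqref{eq:inner_angle1} in two unknowns $\theta_1,\theta_2$ to a single transcendental equation in one auxiliary variable $d$, by exploiting the sum/product structure of the right-hand side of $\cos\phi_i = \cos\frac{\theta_1+\theta_2}{2} + 2\sin\frac{\theta_1}{2}\sin\frac{\theta_2}{2}\,\lambda_i$. I introduce $s := \frac{\theta_1+\theta_2}{2}$ and $q := \frac{\theta_1-\theta_2}{2}$, and set $p := 2\sin\frac{\theta_1}{2}\sin\frac{\theta_2}{2} = \cos q - \cos s$ by the product-to-sum identity. Then \eqref{eq:inner_angle1} becomes the $2\times 2$ linear system
\begin{align*}
\cos s + p\,\lambda_1 &= -\cos\bigl(1-\tfrac{2}{c}\bigr)\tfrac{\pi}{t_2},\\
\cos s + p\,\lambda_2 &= -\cos\tfrac{\pi}{t_2},
\end{align*}
which, since $\lambda_2\neq\lambda_1$, determines $p$ and $\cos s$ uniquely in terms of the known data.

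Next, I impose the ansatz $s = \pi - d\pi/t_2$, so that $\cos s = -\cos(d\pi/t_2)$, and eliminate $p$ from the two relations. This yields a single equation in $d$ alone, namely
$$\frac{\lambda_1}{\lambda_2} \;=\; \frac{\cos(d\pi/t_2) - \cos\bigl(1-\tfrac{2}{c}\bigr)\tfrac{\pi}{t_2}}{\cos(d\pi/t_2) - \cos\tfrac{\pi}{t_2}},$$
which is the promised Eq.~\eqref{eq:first}. The core task is to show that this equation admits a real root $d$ with the claimed asymptotic behavior. To leading order in $\pi/t_2$ (legitimized by the hypothesis $t_2\geq\frac{\pi}{2}\sqrt{r}$, since $\pi/t_2\leq 2/\sqrt{r}$), the right-hand side expands as $\frac{(1-2/c)^2-d^2}{1-d^2}$, while the left-hand side $\lambda_1/\lambda_2 = \frac{N}{2(N-1)} \to \tfrac{1}{2}$ as $N\to\infty$. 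Equating the two gives the quadratic $d^2 = 2(1-2/c)^2-1 = (c^2-8(c-1))/c^2$, which at $c=10$ evaluates to $7/25$ and hence $d\to\sqrt{7}/5$. I would then apply the intermediate value theorem to the continuous function $d\mapsto (\text{RHS}-\lambda_1/\lambda_2)$ on a small interval around $\sqrt{7}/5$, with explicit Taylor-remainder bounds, to promote this approximate root to an exact one for every admissible $N$.

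Once $s$ (via $d$) and $p$ are fixed, I recover $\theta_1$ by solving $\cos q = p + \cos s$ for $q$, which is Eq.~\eqref{eq:second}, and then set $\theta_2 = 2s - \theta_1$. The real-solvability check requires $|p + \cos s|\leq 1$; this follows from the estimate $p = O(r/t_2^2) = O(1)$ with an explicit constant, combined with $\cos s$ being close to $-1$ when $t_2$ is large. The non-degeneracy claim that $cd\pi$ is not an integer multiple of $2\pi$, i.e.\ $cd\notin 2\mathbb{Z}$, holds asymptotically because $2\sqrt{7}\approx 5.29$ is irrational and strictly between the consecutive even integers $4$ and $6$; the quantitative root localization then forces $cd$ to stay in a small interval around $2\sqrt{7}$ that is disjoint from $2\mathbb{Z}$ for all sufficiently large $N$, and the finitely many small values $N\in\{5,\ldots,N_0\}$ are verified by direct numerical inspection.

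The main obstacle I anticipate is the quantitative error control in the second step: one must show, uniformly in $N\geq 5$ and $t_2\geq\frac{\pi}{2}\sqrt{r}$, both that Eq.~\eqref{eq:first} has a root close enough to $\sqrt{7}/5$ for the asymptotic analysis to be self-consistent, and that this root avoids the discrete bad set where $cd\in 2\mathbb{Z}$. Once this bookkeeping is in place, the recovery of $\theta_1,\theta_2$ from $s$ and $q$ is a direct algebraic consequence.
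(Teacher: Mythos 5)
Your reduction is exactly the paper's: the substitution $s=\frac{\theta_1+\theta_2}{2}$, $p=2\sin\frac{\theta_1}{2}\sin\frac{\theta_2}{2}=\cos q-\cos s$ turns Eq.~\eqref{eq:inner_angle1} into a linear system in $(\cos s,p)$, eliminating $p$ under the ansatz $s=\pi-d\pi/t_2$ gives a single equation in $d$ (your $\lambda_1/\lambda_2$ form is algebraically equivalent to Eq.~\eqref{eq:first}, whose right-hand side is $\lambda_2/(\lambda_2-\lambda_1)=2(1+\frac{1}{N-2})$), the leading-order Taylor expansion correctly yields $d^2\to 2(1-2/c)^2-1=(c^2-8(c-1))/c^2$, and the recovery of $\theta_1$ via $\cos q=p+\cos s$ is Eq.~\eqref{eq:second} with essentially the paper's solvability condition driven by $t_2\geq\frac{\pi}{2}\sqrt{r}$.

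The genuine gap is the step you yourself flag as ``the main obstacle'': the uniform, non-asymptotic existence of the root $d$. Your plan --- intermediate value theorem on a small interval around $\sqrt{7}/5$ with Taylor-remainder bounds --- does not work as stated for small and moderate $N$. For $N=5,6,7$ the root sits at $d\approx 0.30,\,0.38,\,0.42$, nowhere near $\sqrt{7}/5\approx 0.529$, and even for $N\geq 8$ the root only approaches $\sqrt{7}/5$ in the limit; moreover $x=\pi/t_2\leq 2/\sqrt{r}$ is \emph{not} small for these $N$ (e.g.\ $x$ can be of order $1$), so quadratic Taylor control of $\cos(dx),\cos(0.8x),\cos(x)$ is too weak to localize the root or to certify $cd\notin 2\mathbb{Z}$. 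The paper closes this gap differently: it proves a monotonicity lemma (Lemma~\ref{lem:hx}) showing $x\mapsto\frac{\cos(ax)-\cos x}{\cos(bx)-\cos x}$ is increasing on $(0,\frac{\pi}{2})$ with limit $\frac{1-a^2}{1-b^2}$ at $0$, and combines it with a convexity bound at $d=0.6$ to show the left-hand side of Eq.~\eqref{eq:first} sweeps all of $[2,\frac{7}{3}]$ as $d$ ranges over $(0.4,0.6)$, \emph{uniformly} in $x\in(0,\frac{\pi}{2})$; this brackets the root in $(0.4,0.6)$ for every $N\geq 8$ (whence $cd\in(4,6)$ avoids even integers for free), with $N=5,6,7$ checked directly. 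Without some such uniform-in-$x$ device your argument only establishes the lemma for $N$ large, so you would need to either prove an analogue of Lemma~\ref{lem:hx} or replace the Taylor step with explicit bounds valid for $x$ up to $\pi/2$. Your solvability check for $\theta_1$ also needs slightly more care than ``$p=O(1)$ with an explicit constant'': the paper needs the sharper bound $1/\lambda_2<r/2$ (valid for $N\geq 6$, with $N=5$ treated separately) to make the constant actually come out $\leq 1$.
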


\begin{proof}
See Appendix \ref{proof:inner_exist}.
\end{proof}

According to Lemma~\ref{thm:inner_exist}, when $N\to \infty$, the effect of $u^{ct_2}$ multiplies $\ket{\psi_0}$ by a relative phase
\begin{equation}\label{eq:infty_beta}
    e^{-cdi\pi}=e^{ -2\sqrt{7}i\pi} \approx e^{ -1.29 i\pi},
\end{equation}
and the claim that $cd\pi$ is not an integer multiple of $2\pi$ guarantees $u^{ct_2}\neq I$.

\subsection{Exact quantum search with arbitrarily given phase rotation}\label{sec:FXR}

In this subsection, we will transform the initial state $\ket{\psi_0}$ to $\ket{T}$ in $\mathcal{H}_0$ by appropriately setting the angle $\alpha$ in $R_T(\alpha)$, with the help of the FXR method \cite{FXR}.

Let $\beta =-( c\cdot t_2 \frac{\theta_1+\theta_2}{2} \mod 2\pi)=(cd\pi \mod 2\pi)$, then by Eq.~\eqref{eq:Rt} and Lemma~\ref{lem:inner_effect}, the effect of $R_T(\alpha)$ and $u^{ct_2}$ in $\mathcal{H}_0$ are
\begin{equation}
    I-(1-e^{i\alpha})\ket{T}\bra{T} =: S_o(\alpha),
    \quad
    I-(1-e^{-i\beta})\ket{\psi_0}\bra{\psi_0} =: S_r(\beta),
\end{equation}
respectively. Recall from Eq.~\eqref{eq:psi_0} that $\braket{T|\psi_0} = \sqrt{\frac{r(r-1)}{N(N-1)}} =: \sqrt{\lambda}$. 
Thus we might as well let $\ket{\psi_0} = \sqrt{1-\lambda} \ket{R} + \sqrt{\lambda}\ket{T}$, where $\ket{R}\in \text{span}\{\ket{T},\ket{\psi_0}\}$ and is perpendicular to $\ket{T}$.
We denote by $\mathcal{H}_1 := \text{span}\{\ket{R},\ket{T}\}$ the $2$-dimensional invariant subspace of $u^{ct_2}R_T(\alpha) = S_r(\beta) S_o(\alpha)$. Note that $S_r(\beta) S_o(\alpha)$ is the generalized Grover's operator $G(\alpha,\beta)$, where $S_o$ is the `oracle' operator, and $S_r$ is the diffusion operator.

Just as the calculation step of Theorem 2 in \cite{FXR}, by repeatedly using the composition lemma of two rotations, we find that the effect of 
\begin{equation}
    F_{xr,\beta}:= G(\alpha_2,\beta) G(\alpha_1,\beta),
\end{equation}
which is now the effect of one iteration of the outer loop (step 3 in Algorithm~\ref{alg:main_block}),
is a rotation $e^{i(\frac{\alpha_1+\alpha_2}{2}-\beta)} R_{\vec{n}}(\phi)$ on the Bloch sphere of $\mathcal{H}_1$. The rotation angle $\phi$ satisfies Eq.~\eqref{eq:cos_phi_def}, and the axis $\vec{n}$ are as follows:
\begin{align}
    \sin\phi\, n_x &= 2\sqrt{\lambda(1-\lambda)} (\cos\frac{\alpha_1}{2} \cos\frac{\alpha_2}{2} \sin\beta - 2(1-2\lambda) \cos\frac{\alpha_1}{2} \sin\frac{\alpha_2}{2} \sin^2\frac{\beta}{2}), \label{eq:sinnx1}\\
    \sin\phi\, n_y &= 2\sqrt{\lambda(1-\lambda)} ( -\sin\frac{\alpha_1}{2} \cos\frac{\alpha_2}{2} \sin\beta + 2(1-2\lambda) \sin\frac{\alpha_1}{2} \sin\frac{\alpha_2}{2} \sin^2\frac{\beta}{2} ), \label{eq:sinny1}\\
    \sin\phi\, n_z &= \sin\frac{\alpha_1}{2} \cos\frac{\alpha_2}{2} \cos\beta +\cos\frac{\alpha_1}{2} \sin\frac{\alpha_2}{2}
    \notag \\ &\qquad
    +(1-2\lambda) \cos\frac{\alpha_1+\alpha_2}{2} \sin\beta  -2(1-2\lambda)^2 \cos\frac{\alpha_1}{2} \sin\frac{\alpha_2}{2} \sin^2\frac{\beta}{2} \label{eq:sinnz1}.
\end{align}

Transforming $\ket{\psi_0}$ to $\ket{T}$ in $\mathcal{H}_1$ is then equivalent to the equation $0 = \bra{R} R^k_{\vec{n}}(\phi) \ket{\psi_0}$. By setting the equation's real and imaginary part both to be zero, and with the help of Eq.~\eqref{eq:rotate_composition}, we obtain the equations that parameters $(k=t_1,\alpha_1,\alpha_2)$ need to satisfy:
\begin{align}
0 &= \sqrt{1-\lambda} \cos{k\phi} - \sqrt{\lambda} \sin{k\phi}\ n_y, \label{real1} \tag{real 1}\\
0 &= \sqrt{1-\lambda}\ n_z + \sqrt{\lambda}\ n_x. \label{im1} \tag{im 1}
\end{align}
Substituting Eqs.~\eqref{eq:sinnx1}-\eqref{eq:sinnz1} into the above two equations then leads to Eqs.~\eqref{eq:real1},\eqref{eq:im1}.

Let $\phi_0 = |4\arcsin(\sqrt{\lambda}\sin\frac{\beta}{2}) \mod [-\frac{\pi}{2},\frac{\pi}{2}]|$, then Theorem 2 in \cite{FXR} shows that the above equations possess a solution when $k>\pi/\phi_0$. This explains the value of $t_1$ shown in Eq.~\eqref{eq:outer_iter1}.

\subsection{Query lower bound}\label{sec:lower_bound}
\begin{lemma}\label{thm:lower}
The deterministic quantum query complexity of Problem~\ref{problem:at_most_1} is $\Omega(N^{2/3})$.
\end{lemma}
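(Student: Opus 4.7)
The plan is to invoke the general (negative-weight) adversary lower bound via a trivial reduction to the decision version of Problem~\ref{problem:at_most_1}, exactly along the lines already sketched at the end of the introduction. Writing $f_{\mathrm{dec}}$ for the Boolean function (over the promised domain) whose value is $1$ iff $x$ contains a colliding pair, the goal is to establish the chain
\[
Q_E(\text{Problem~\ref{problem:at_most_1}}) \;\geq\; Q_E(f_{\mathrm{dec}}) \;\geq\; Q_2(f_{\mathrm{dec}}) \;\geq\; \mathrm{ADV}^{\pm}(f_{\mathrm{dec}}) \;=\; \Omega(N^{2/3}),
\]
where $Q_E$ is the exact (deterministic) quantum query complexity, $Q_2$ is the two-sided bounded-error quantum query complexity, and $\mathrm{ADV}^{\pm}$ is the general adversary bound.

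First I would justify the leftmost inequality by a zero-query post-processing reduction: given any exact $T$-query algorithm $\mathcal{B}$ for Problem~\ref{problem:at_most_1}, one obtains an exact $T$-query algorithm for $f_{\mathrm{dec}}$ by running $\mathcal{B}$ and outputting $0$ iff $\mathcal{B}$ returns ``all distinct'' (no extra oracle calls are needed since $\mathcal{B}$'s output already witnesses the decision). The second inequality is immediate from the definitions, since any exact algorithm is in particular a zero-error bounded-error algorithm. The third inequality is the tight general adversary lower bound of H{\o}yer, Lee and {\v{S}}palek~\cite{hoyer2007negative}.

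The only substantive ingredient is the final equality $\mathrm{ADV}^{\pm}(f_{\mathrm{dec}}) = \Omega(N^{2/3})$, and this is the main obstacle to a self-contained treatment. Rather than rederiving it, my plan is to quote the results of Belovs~\cite{belovs2012adversary} and Rosmanis~\cite{rosmanis2014quantum} directly: they construct an explicit adversary matrix $\Gamma$ supported on pairs consisting of one all-distinct input and one single-collision input, and establish $\|\Gamma\| / \max_i \|\Gamma \circ \Delta_i\| = \Omega(N^{2/3})$, where $\Delta_i$ is the indicator matrix of pairs of inputs differing at position $i$. A from-scratch proof would require a delicate symmetrization of $\Gamma$ under the natural $S_N \times S_M$ action and an eigenvalue bound via representation theory, which is well beyond the scope of a remark-style lemma in the present paper. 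Concatenating the four displayed facts then closes the proof of Lemma~\ref{thm:lower}.
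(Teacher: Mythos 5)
Your proposal is correct and follows essentially the same route as the paper: reduce to the decision version, invoke the H{\o}yer--Lee--{\v{S}}palek adversary bound (the paper simply sets $\epsilon=0$ in $Q_\epsilon(f)\geq\frac{1-2\sqrt{\epsilon(1-\epsilon)}}{2}\mathrm{ADV}^\pm(f)$ rather than passing through $Q_2$, but this is immaterial up to constants), and quote Belovs and Rosmanis for $\mathrm{ADV}^\pm(f)=\Omega(N^{2/3})$. The only cosmetic difference is that the paper additionally remarks that Rosmanis's argument removes the $M=\Omega(N^2)$ restriction of Belovs's construction and extends to all $M\geq N$.
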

\begin{proof}
The easier decision problem of Problem~\ref{problem:at_most_1}, i.e. to determine whether elements in $(x_1,x_2,\ldots,x_N)\in [M]^N$ are all distinct, is in fact the evaluation of partial Boolean function $f:\mathcal{D}\to \{0,1\}$, whose domain $\mathcal{D}\subseteq [M]^N$ consists of two parts: $f^{-1}(1)=\{x\in[M]^N : x\ \text{contains a single colliding pair} \}$, and $f^{-1}(0)=\{x\in[M]^N : \text{elements in}\ x\ \text{are all distinct} \}$. 
We therefore need only to show that the deterministic quantum query complexity of evaluating $f$ is $\Omega(N^{2/3})$. 
Let $Q_{\epsilon}(f)$ be the bounded-error quantum query complexity of evaluating $f$. Bounded-error means to output the correct answer with probability $1-\epsilon$, where $0\leq\epsilon<1/2$. According to Theorem 2 in \cite{hoyer2007negative},
\begin{equation}
    Q_{\epsilon}(f)\geq \frac{1-2\sqrt{\epsilon(1-\epsilon)}}{2} \text{ADV}^\pm (f),
\end{equation}
where $\text{ADV}^\pm (f)$ is the negative-weight adversary lower bound of $f$. Hence, by letting $\epsilon=0$, we know that the deterministic quantum query complexity of evaluating $f$ is at least $\text{ADV}^\pm (f)/2$. The definition of $\text{ADV}^\pm (f)$ is
\begin{equation}
    \text{ADV}^\pm (f) = \max_{\Gamma\in \mathcal{M}} \frac{||\Gamma||}{\max_{i\in[n]} ||\Gamma \circ \Delta_i||},
\end{equation}
where $\mathcal{M}$ are the set of all the nonzero real matrices, whose row indices composed of elements in $f^{-1}(1)$, and whose column indices composed of elements in  $f^{-1}(0)$; element $(x,y)$ in the $0-1$ matrix $\Delta_i\in\mathcal{M}$ equals $1$ if and only if $x_i\neq y_i$; notation $\circ$ is the Hadamard product, and $||\cdot||$ is the spectral norm.

It has been shown respectively in \cite{belovs2012adversary} and \cite{rosmanis2014quantum} that when $M=\Omega(N^2)$ and $M=N$, the lower bound of $\text{ADV}^\pm (f)$ are both $\Omega(N^{2/3})$.
The proof in \cite{rosmanis2014quantum} does not follow the idea in \cite{belovs2012adversary} of emending the adversary matrix $\Gamma$ into some bigger matrix related to the Hamming association scheme, and therefore it breaks the restriction of $M=\Omega(N^2)$. It's also said in \cite{rosmanis2014quantum} that its proof can be generalized to any $M\geq N$, thus the deterministic quantum query complexity of Problem~\ref{problem:at_most_1} is $\Omega(N^{2/3})$.
\end{proof}

\section{Conclusion and future work}
In this paper, we have designed an exact quantum algorithm with query complexity $\Theta(N^{2/3})$ that solves the promised element distinctness  problem with at most one colliding pair. The algorithm utilize Jordan's lemma about common invariant subspaces, and the recently proposed FXR method for exact quantum search with arbitrary phase rotations, so as to achieve certainty of success. In contrast, previous algorithms based on the analysis of asymptotic behaviours of eigenvalues of quantum walk search  could not achieve 100\% success.
Whether the idea in this paper could be applied to other  quantum walk search problem is a possible direction for future research.
\bibliographystyle{unsrt}
\bibliography{reference}

\begin{thebibliography}{10}

\bibitem{decision_tree_survey}
Harry Buhrman and Ronald {de Wolf}.
\newblock Complexity measures and decision tree complexity: a survey.
\newblock {\em Theoretical Computer Science}, 288(1):21--43, 2002.
\newblock Complexity and Logic.

\bibitem{quantum_alg1.2}
Harry Buhrman, Christoph D\"{u}rr, Mark Heiligman, Peter H\o{}yer,
  Fr\'{e}d\'{e}ric Magniez, Miklos Santha, and Ronald de~Wolf.
\newblock Quantum algorithms for element distinctness.
\newblock {\em SIAM Journal on Computing}, 34(6):1324--1330, 2005.

\bibitem{quantum_alg2.1}
A.~Ambainis.
\newblock Quantum walk algorithm for element distinctness.
\newblock In {\em 45th Annual IEEE Symposium on Foundations of Computer
  Science}, pages 22--31, 2004.

\bibitem{quantum_alg2.2}
Andris Ambainis.
\newblock Quantum walk algorithm for element distinctness.
\newblock {\em SIAM Journal on Computing}, 37(1):210--239, 2007.

\bibitem{quantum_lower1}
Yaoyun Shi.
\newblock Quantum lower bounds for the collision and the element distinctness
  problems.
\newblock In {\em The 43rd Annual IEEE Symposium on Foundations of Computer
  Science, 2002. Proceedings.}, pages 513--519, 2002.

\bibitem{quantum_lower2}
Scott Aaronson and Yaoyun Shi.
\newblock Quantum lower bounds for the collision and the element distinctness
  problems.
\newblock {\em J. ACM}, 51(4):595–605, jul 2004.

\bibitem{quantum_lower3}
Andris Ambainis.
\newblock Polynomial degree and lower bounds in quantum complexity: Collision
  and element distinctness with small range.
\newblock {\em Theory of Computing}, 1(3):37--46, 2005.

\bibitem{polynomial}
Robert Beals, Harry Buhrman, Richard Cleve, Michele Mosca, and Ronald De~Wolf.
\newblock Quantum lower bounds by polynomials.
\newblock {\em Journal of the ACM (JACM)}, 48(4):778--797, 2001.

\bibitem{staggered}
Renato Portugal.
\newblock Staggered quantum walks on graphs.
\newblock {\em Phys. Rev. A}, 93:062335, Jun 2016.

\bibitem{Portugal_2018}
Renato Portugal.
\newblock Element distinctness revisited.
\newblock {\em Quantum Information Processing}, 17(7):163, May 2018.

\bibitem{Portugal_book}
Renato Portugal.
\newblock {\em Element Distinctness}, pages 201--221.
\newblock Springer International Publishing, Cham, 2018.

\bibitem{Jordan_1875}
Camille Jordan.
\newblock Essai sur la géométrie à $n$ dimensions.
\newblock {\em Bulletin de la Société Mathématique de France}, 3:103--174,
  1875.

\bibitem{FXR}
Guanzhong Li and Lvzhou Li.
\newblock Quantum exact search with adjustable parameters and its applications.
\newblock \url{https://arxiv.org/abs/2210.12644}, 2022.

\bibitem{Szegedy_03}
M.~Szegedy.
\newblock Quantum speed-up of markov chain based algorithms.
\newblock In {\em 45th Annual IEEE Symposium on Foundations of Computer
  Science}, pages 32--41, 2004.

\bibitem{Krovi2016}
Hari Krovi, Fr{\'e}d{\'e}ric Magniez, Maris Ozols, and J{\'e}r{\'e}mie Roland.
\newblock Quantum walks can find a marked element on any graph.
\newblock {\em Algorithmica}, 74(2):851--907, Feb 2016.

\bibitem{QSVT}
Andr\'{a}s Gily\'{e}n, Yuan Su, Guang~Hao Low, and Nathan Wiebe.
\newblock Quantum singular value transformation and beyond: Exponential
  improvements for quantum matrix arithmetics.
\newblock In {\em Proceedings of the 51st Annual ACM SIGACT Symposium on Theory
  of Computing}, STOC 2019, page 193–204, New York, NY, USA, 2019.
  Association for Computing Machinery.

\bibitem{hoyer2007negative}
Peter Hoyer, Troy Lee, and Robert Spalek.
\newblock Negative weights make adversaries stronger.
\newblock In {\em Proceedings of the thirty-ninth annual ACM symposium on
  Theory of computing}, pages 526--535, 2007.

\bibitem{belovs2012adversary}
Aleksandrs Belovs.
\newblock Adversary lower bound for element distinctness.
\newblock {\em arXiv preprint arXiv:1204.5074}, 2012.

\bibitem{rosmanis2014quantum}
Ansis Rosmanis.
\newblock Quantum adversary lower bound for element distinctness with small
  range.
\newblock {\em Chicago Journal of Theoretical Computer Science}, 4:2014, 2014.

\bibitem{web_question}
David Lee.
\newblock A recurrence relation problem from the eigenvalue of a special
  tridiagonal matrix.
\newblock
  \url{https://mathoverflow.net/questions/428331/a-recurrence-relation-problem-from-the-eigenvalue-of-a-special-tridiagonal-matri}.

\bibitem{nielsen_chuang_2010}
Michael~A. Nielsen and Isaac~L. Chuang.
\newblock {\em Quantum Computation and Quantum Information: 10th Anniversary
  Edition}.
\newblock Cambridge University Press, 2010.

\end{thebibliography}


\begin{appendices}

\section{Proof of Lemma \ref{thm:inner_exist}}\label{proof:inner_exist}

We first rearrange Eq.~\eqref{eq:inner_angle1} into a form that is easier to analyze the existence of solutions.
Subtracting the first equation of Eq.~\eqref{eq:inner_angle1} by the second equation, and substituting the expression of $\cos\phi_i$ into them, we obtain
\begin{equation}\label{eq:inner_angle3}
    2\sin\frac{\theta_1}{2} \sin\frac{\theta_2}{2} \Delta = \delta_c, \quad 2\sin\frac{\theta_1}{2} \sin\frac{\theta_2}{2}\, \lambda_2 = \cos\left( \pi-\frac{\theta_1+\theta_2}{2} \right) - \cos\frac\pi{t_2},
\end{equation}
where $\Delta = \lambda_2-\lambda_1 = \frac{N-2}{(N-r)(r+1)}$, and $\delta_c = \cos((1-2/c)\pi/t_2) - \cos(\pi/t_2)$.
We then introduce the intermediate variable $d$ satisfying:
\begin{equation}\label{eq:relation_d_theta}
    \pi-\frac{\theta_1+\theta_2}{2} = d \frac\pi{t_2}.
\end{equation}
Substituting (Eq.1) of Eq.~\eqref{eq:inner_angle3} by the division of (Eq.2)/(Eq.1), and dividing both sides of (Eq.2) by $\lambda_2$, we have the following equations about $d$ and $\theta_1$:
\begin{align}
    \frac{\cos(d\frac\pi{t_2})-\cos(\frac\pi{t_2})}{\cos((1-\frac{2}{c})\frac\pi{t_2}) - \cos(\frac\pi{t_2})} &= 2(1+\frac{1}{N-2}), \label{eq:first} \\
    \cos(d\frac\pi{t_2}) - \cos(d\frac\pi{t_2}+\theta_1) &= \frac{1}{\lambda_2} \left(\cos(d\frac\pi{t_2})-\cos(\frac\pi{t_2})\right) \label{eq:second},
\end{align}
where we have used the identity $2\sin\frac{\theta_1}{2} \sin\frac{\theta_2}{2} = \cos(d\frac\pi{t_2}) - \cos(d\frac\pi{t_2}+\theta_1)$.
As stated in Lemma~\ref{thm:inner_exist}, when $ct_2$ is fixed, by solving the first equation we obtain $d$. Substituting it into the second equation and solving it gives us $\theta_1$. Then by Eq.~\eqref{eq:relation_d_theta} we obtain $\theta_2$.

We now show that when $N\geq 5$ and $c=10$, there exists solution $d$ to Eq.~\eqref{eq:first}, and $cd\pi$ is not an integer multiple of $2\pi$. For $\eta_0^0$ [see Eq.~\eqref{eq:set_size}] to be nonempty, we require $r<N-2$, and thus $N\geqslant 5$. 
In the first three cases of $N=5,6,7$, the solutions to Eq.~\eqref{eq:first} are $d\approx 0.30,0.38,0.42$, so $10d\pi$ is not an integer multiple of $2\pi$. 
When $N\geq 8$, the RHS of Eq.~\eqref{eq:first} varies in $(2,\frac{7}3 ]$. By the intermediate value theorem of continuous function, we only need to show that the LHS of Eq.~\eqref{eq:first} as a continuous function of $d\in(0.4,0.6)$ contains the closed interval $[2,\frac{7}3 ]$ in its codomain. Then there exists solution $d$ to Eq.~\eqref{eq:first}, and $cd\cdot\pi\in(4\pi,6\pi)$ is not an integer multiple of $2\pi$.
From $c=10$ we know that the LHS of Eq.~\eqref{eq:first} becomes:
\begin{equation*}
    h(d,x):=\frac{\cos(dx)-\cos(x)}{\cos(0.8x) -\cos(x)},
\end{equation*}
where $x:=\frac\pi{t_2}\in(0,\frac\pi 2)$. 
When $d=0.6$, we have $x-dx = 2(x-0.8x)$. As $\cos(x)$ is convex and monotone decreasing on interval $(0,\frac\pi 2)$, we have $h(0.6,x)<2$.
When $d=0.4$, from the following Lemma~\ref{lem:hx}, we know $h(0.4,x) > \lim_{x\to 0} h(0.4,x) = \frac{1-0.4^2}{1-0.8^2}=7/3$. 
Thus $h(d,x)$ as a function of $d\in(0.4,0.6)$, i.e. the LHS of Eq.~\eqref{eq:first}, contains $[2,\frac{7}3 ]$ in its codomain.

\begin{lemma}\label{lem:hx}
When $0<a<b<1,\, 0<x<\frac{\pi}{2}$, function $h(x)=\frac{\cos(ax)-\cos(x)}{\cos(bx) -\cos(x)}$ is monotone increasing, and $\lim_{x\to 0} h(x) = \frac{1-a^2}{1-b^2}$.
\end{lemma}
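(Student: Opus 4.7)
The plan is to dispose of the limit with a Taylor expansion and reduce the monotonicity claim to the strict concavity of the auxiliary function $\rho(t):=t\cot t$ on $(0,\pi/2)$.

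The limit is routine: the expansions $\cos(ct)=1-(ct)^2/2+O(t^4)$ give $\cos(ax)-\cos(x)=\tfrac{1-a^2}{2}x^2+O(x^4)$ and the analogous formula with $b$ in place of $a$, so $h(x)\to (1-a^2)/(1-b^2)$ as $x\to 0^+$.

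For monotonicity, I first note that under the hypotheses both $\cos(ax)-\cos(x)$ and $\cos(bx)-\cos(x)$ are strictly positive, so it suffices to show $(\log h)'(x)>0$. The key tool is the product-to-sum identity
\begin{equation*}
\cos(cx)-\cos(x)=2\sin\tfrac{(1+c)x}{2}\sin\tfrac{(1-c)x}{2}.
\end{equation*}
Setting $p_c:=(1+c)x/2$ and $q_c:=(1-c)x/2$, logarithmic differentiation yields
\begin{equation*}
\frac{d}{dx}\log\bigl(\cos(cx)-\cos(x)\bigr)=\frac{\rho(p_c)+\rho(q_c)}{x},
\end{equation*}
so $(\log h)'(x)>0$ is equivalent to the inequality
\begin{equation*}
\rho(p_a)+\rho(q_a)>\rho(p_b)+\rho(q_b),
\end{equation*}
where the two pairs have the common sum $p_c+q_c=x$ but $|p_a-q_a|=ax<bx=|p_b-q_b|$.

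At this point the claim is a standard Schur-concavity statement: for a strictly concave $\rho$ on a convex interval, the quantity $\rho(p)+\rho(q)$ with $p+q$ fixed strictly decreases as $|p-q|$ increases. The one step that requires actual work, and which I expect to be the main (mild) obstacle, is establishing strict concavity of $\rho$ on $(0,\pi/2)$. Direct calculation gives
\begin{equation*}
\rho''(t)=\frac{2\,(t\cos t-\sin t)}{\sin^{3} t},
\end{equation*}
which is strictly negative on $(0,\pi/2)$ because $\tan t>t$ there. Since $c\in(0,1)$ and $x\in(0,\pi/2)$ force $p_c,q_c\in(0,\pi/2)$, this concavity applies on the relevant segment; differentiating $s\mapsto \rho(x/2+s)+\rho(x/2-s)$ in $s$ and using that $\rho'$ is strictly decreasing makes the Schur-concavity step rigorous, completing the proof of strict monotonicity of $h$.
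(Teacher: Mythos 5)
Your proof is correct, but it takes a genuinely different route from the paper's. The paper differentiates the quotient directly: writing $h=f/g$, it observes that the numerator $z=f'g-g'f$ of $h'$ vanishes at $0$ and computes $z'=f''g-g''f$, concluding positivity from the resulting trigonometric expression. (As a side remark, the displayed coefficient of $\cos(ax)\cos(x)$ in the paper's expansion of $z'$ should be $a^2-1$, not $a^2+1$, so the positivity of $z'$ is actually less immediate than the paper suggests; one way to finish that route is to note that the required inequality is exactly the strict convexity of $t\mapsto \sec(\sqrt{t}\,x)$ applied at $b^2=\tfrac{b^2-a^2}{1-a^2}\cdot 1+\tfrac{1-b^2}{1-a^2}\cdot a^2$.) Your argument instead factors $\cos(cx)-\cos(x)=2\sin\tfrac{(1+c)x}{2}\sin\tfrac{(1-c)x}{2}$, reduces $(\log h)'>0$ to comparing $\rho(p)+\rho(q)$ for two pairs with equal sum and different spread, and closes the gap with the strict concavity of $\rho(t)=t\cot t$ on $(0,\pi/2)$ (your formula for $\rho''$ and the inequality $\tan t>t$ check out, and the arguments $p_c,q_c$ do lie in $(0,\pi/2)$). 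What your approach buys is a transparent and self-contained positivity argument via a one-variable concavity fact plus elementary majorization; what the paper's approach buys is brevity and avoidance of the logarithm, at the cost of a final sign verification that, as printed, is not airtight. Your treatment of the limit is equivalent to the paper's L'H\^{o}pital computation.
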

\begin{proof}
Denote the numerator and denominator of $h$ by $f$ and $g$. The numerator of $h'(x)$ is $z:=f'g-g'f$, and $z(0)=0$.
Substituting $f''=-a^2\cos(ax)+\cos(x)$ and $g''=-b^2\cos(bx)+\cos(x)$ into $z'=f''g-g''f$, we have $z'=(b^2-a^2)\cos(ax)\cos(bx) + (1-b^2)\cos(x)\cos(bx) + (a^2+1)\cos(ax)\cos(x) >0$. Hence, $z(x)>z(0)=0$ and therefore $h(x)$ is monotone increasing. By L'Hopital's rule, we know that $h(x)\to \frac{f''}{g''} = \frac{1-a^2}{1-b^2}$, when $x\to 0$.
\end{proof}

We now consider the limit of solution $d$ to Eq.~\eqref{eq:first}, when $N\to\infty$. 
First note that $x=\frac{\pi}{t_2} < \frac{2}{\sqrt{r}} \to 0$, as it's assumed that $t_2 \geq \frac{\pi}{2} \sqrt{r}$. From the Taylor expansion $\cos(x)=1-\frac{1}{2}x^2 + o(x^4)$, we know
\begin{equation}
    h(d,x) = \frac{1-d^2+o(x^2)}{1-(1-2/c)^2+o(x^2)} \to \frac{1-d^2}{1-(1-2/c)^2}.
\end{equation}
Combining $RHS\to 2$ when $N\to\infty$, we know the solution $d$ to Eq.~\eqref{eq:first} satisfies $\frac{1-d^2}{1-(1-2/c)^2} \to 2$. Hence, $d \to \sqrt{c^2-8(c-1)}/c$.

Finally we show that when $N\geq 5$, $t_2 \geq \frac{\pi}{2} \sqrt{r}$ and $d\in(0,1)$ (proved above), there exists solution $\theta_1$ to Eq.~\eqref{eq:second}.
Because $d\in(0,1)$, we have that the RHS $\in (0,1/\lambda_2 (1-\cos(\pi/t_2))$. The LHS is a continuous function of $\theta_1$, and its codomain contains the closed interval $[0,1]$. Thus by continuity, we only need to guarantee $1 \geq 1/\lambda_2 (1-\cos(\pi/t_2))$ then there exists solution $\theta_1$ to Eq.~\eqref{eq:second}.
When $N\geq 6$, it is easy to verify $\frac{r}2 > \frac{(N-r)(r+2)}{2(N-1)} = 1/\lambda_2$. So we only need to ensure that $1 \geq \frac{r}2 (1-\cos(\pi/t_2))$. 
As $1-\cos(x) \leqslant \frac{1}{2}x^2$, we only need to ensure that $1 > \frac{r \pi^2}{4t_2^2}$, which is
\begin{equation}
    t_2 \geq \frac{\pi}{2} \sqrt{r}.
\end{equation}
When $N=5$, we have $r=2$, hence $1/\lambda_2=9/8$. Thus we require $1 \geq \frac{9}{8}\frac{1}{2}\frac{\pi^2}{t_2^2}$, which is equivalent to $t_2 \geq \frac{3}4 \pi$. This is the same requirement as $t_2 \geq \frac{\pi}{2} \sqrt{r}$ since $t_2$ is an integer.
From the above two cases, we conclude that when $N\geq 5$, $d\in(0,1)$ and $t_2 \geq \frac{\pi}{2} \sqrt{r}$, there exists solution $\theta_1$ to Eq.~\eqref{eq:second}.

\end{appendices}

\end{document}